\newcommand{\bm}[1]{\mbox{\boldmath $ {#1} $}}
\renewcommand{\epsilon}{\varepsilon}
\newtheorem{theorem}{Theorem}[section]
\newtheorem{proposition}[theorem]{Proposition}
\theoremstyle{definition}
\newtheorem{definition}[theorem]{Definition}
\newtheorem{rem}[theorem]{Remark}
\numberwithin{equation}{section}
\numberwithin{theorem}{section}
\begin{document}

\title[The Hegselmann-Krause dynamics on the circle converge] 
{The Hegselmann-Krause dynamics on the circle converge}

\author{Peter Hegarty$^{1,2}$, Anders Martinsson$^{1,2}$ and Edvin Wedin$^{1,2}$} 
\address{$^1$Mathematical Sciences, Chalmers, 41296 Gothenburg, Sweden} 
\address{$^2$Mathematical Sciences, University of Gothenburg,  41296 Gothenburg, Sweden} 
\email{hegarty@chalmers.se}
\email{andemar@chalmers.se}
\email{edvinw@student.chalmers.se}




\subjclass[2000]{39A60, 93A14, 91D10} \keywords{Hegselmann-Krause model, circle, convergence}

\date{\today}

\begin{abstract}
We consider the Hegselmann-Krause dynamics on a one-dimensional torus and provide the first proof of convergence of this system. The proof requires only fairly minor modifications of existing methods for proving convergence in Euclidean space.
\end{abstract}

\maketitle

\section{Introduction}\label{sect:intro}

The so-called Hegselmann-Krause bounded confidence model (HK-model for brevity), introduced in \cite{HK}, is one of the most popular mathematical models for the dynamics of opinion formation in groups of interacting agents. In its classical formulation, we have a finite number $n$ of agents, indexed by the integers $1,\, 2,\dots,\,n$. 
Time is measured discretely and the opinion of agent $i$ at time $t \in \mathbb{N} \cup \{0\}$ is represented by a real number $x_{t}(i)$. There is a fixed parameter $r > 0$ such that the dynamics are given by
\begin{equation}\label{eq:update}
x_{t+1}(i) = \frac{1}{|\mathcal{N}_{t}(i)|} \sum_{j \in \mathcal{N}_{t}(i)} x_{t}(j),
\end{equation}
where $\mathcal{N}_{t}(i) = \{j : |x_{t}(j) - x_{t}(i)| \leq r \}$. In words, at each time step an agent takes account of his so-called \emph{$r$-neighbourhood}, consisting of those other agents whose opinions currently lie within distance $r$ of his own. He updates his opinion to the average of those held by the members of his $r$-neighbourhood, including himself. 
\par In purely formal terms, the model makes sense if opinions $x_{t}(i)$ are assumed to come from a set $V$ which has enough structure so that it is possible to make sense of the command to 
\\
\par \emph{`move to the average of a finite collection of points within distance $r$ of your present location'}. 
\\
\\
We shall refer to any rigorous formulation of this procedure as a \emph{HK-update rule}. The simplest generalisation would be to $\mathbb{R}^k$ for any $k \geq 1$ and with the Euclidean metric. This is natural if we imagine that there are $k$ `issues' on which agents have opinions, and they are only willing to compromise with those whose opinions on \emph{every} issue are sufficiently close to their own.
There are rigorous results in the literature concerning the HK-dynamics in Euclidean space of arbitrary dimension, see for example \cite{BBCN}. Note, though, that it is not clear what is the most natural formulation of the HK-dynamics in $\mathbb{R}^k$ for $k > 1$. For example, the $L^{\infty}$-metric may better capture the notion that an agent will only compromise with those whose opinions on all issues are close to their own -- indeed, the $L^2$-metric seems better suited when considering the HK-dynamics as a rendezvous procedure, as in for example \cite{MBCF}. However, it is also plausible that the HK-update rule is too strict a criterion for compromise. At least one group of authors \cite{ZS} has suggested a more general model in which there is a second parameter $l \leq k$ such that one agent will compromise with another whose opinions on at least $l$ of the $k$ issues are within a fixed distance of his own. Note that, in this generality, we are no longer dealing with a HK-update rule, as formulated above, since if $l < k$ then the `distance' between agents does not satisfy the triangle inequality. 
\par If we want to stick with the HK-update rule, then perhaps the simplest and most natural example to consider after Euclidean space is a circle. Let $\mathcal{C}_p$ denote a circle of perimeter $p$, equipped with its natural metric which we will denote by $\delta(\cdot, \, \cdot)$. If $p \leq 2r$ then the `average location of those in your $r$-neighbourhood' is not well-defined, unless there is a priori agreement on an orientation of the circle and an origin, i.e.: on a continuous bijection from $\mathcal{C}_p$ to $[0,\,p)$. In any case, if such a priori agreement existed, then the dynamics would become trivial as any configuration would collapse immediately to a complete consensus. The circle is also uninteresting if $p > nr$, since then there must be a pair of consecutive agents at distance greater than $r$ apart, so the configuration evolves as if it were living on the real line, with these two agents as the extreme opinions. Hence, when studying the circle, we may assume that $2r < p \leq nr$. 
Indeed, without loss of generality we can fix $r=1$ and have $p \in (2, \, n]$ as the only variable parameter. 
\par There are at least two possible motivations for considering the HK-dynamics on a circle:
\begin{enumerate}[(i)]
\item Firstly, it is possible to think of `real-life' situations where it makes more sense to consider opinions as lying on a circle instead of the real line. For example, the subject under debate may be the choice of a time of day or date on which to hold some event. 
\item Secondly, the circle is perhaps the simplest example of a space on which the HK-dynamics behave differently from in Euclidean space in a fundamental sense which we now describe.
\end{enumerate}
It is a well-known fact that, in Euclidean space, any configuration of opinions obeying the dynamics in (\ref{eq:update}) will freeze in finite time, that is, there will be some $T > 0$ such that $x_t (i) = x_T (i)$ for all $1 \leq i \leq n$ and $t \geq T$. In fact, it is known that the time taken to freeze is bounded above by a universal polynomial function of the number $n$ of agents. In one dimension, the best bound to date is $O(n^3)$, proven independently in \cite{BBCN} and \cite{MT}. In higher dimensions, the best published bound \cite{EB} is $O(n^8)$, but a recent preprint \cite{M} improves this to $O(n^4)$. Moreover, in a frozen configuration in $\mathbb{R}^k$, agents must either \emph{agree}, that is $x(i) = x(j)$, or be beyond each other's influence, that is $d(x(i), \, x(j)) > r$. 
\par On the other hand, if on the circle we place $n \geq p$ agents at angles $2 \pi j /n$, $0 \leq j < n$, then we have a frozen configuration in which no proper subset of the agents is isolated from its complement, but nor are any two agents in agreement. There also exist frozen configurations with these properties in which agents are not equally spaced. It is easy to see that any such configuration must include at least $4$ agents, and that the only example with $4$ agents, modulo translations, is that in Figure \ref{fig:stable} below. Furthermore, on the circle there are configurations which never freeze. For example, if we perturb the configuration in Figure \ref{fig:stable} by a sufficiently small amount then we will obtain one starting from which we never freeze, since the updated configurations will converge back to that in Figure \ref{fig:stable}, possibly translated, at an exponential rate. See Remark \ref{rem:4points} below for a proof of this last statement. 

\begin{figure}[h]
\begin{center}
\includegraphics[scale=1.5, trim = 70mm 0mm 70mm 0mm, clip,]{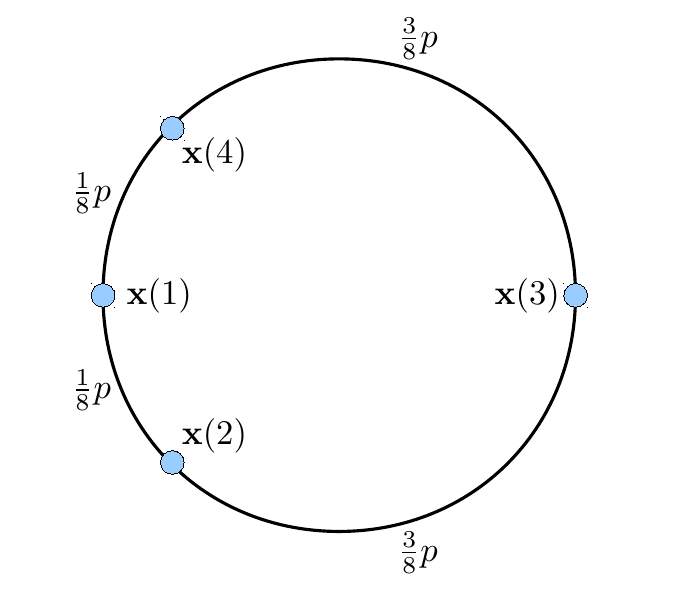}
\caption{A frozen configuration in which agents are not equally spaced. The perimeter $p$ should lie in the open interval $\left( 2, \, \frac{8}{3} \right)$.}
\label{fig:stable}
\end{center}
\end{figure}

Before proceeding, let us formally define the term \emph{convergence} for a finite set of agents obeying the HK-dynamics on a circle:
\begin{definition}\label{defi:converge}
Let $n \in \mathbb{N}$, $p \in \mathbb{R}_{> 2}$. Let $\bm{x}_0 = (x_0 (1), \dots, x_0 (n)) \in (\mathcal{C}_p)^n$, the latter equipped with the product topology. Suppose that, for all $t > 0$, the vectors $\bm{x}_t = (x_t (1),\dots, x_t (n))$ are defined inductively by the HK-update rule, with $r = 1$. 
If, for some $\bm{x}_{\infty} \in (\mathcal{C}_p)^n$, we have $\bm{x}_t \rightarrow \bm{x}_{\infty}$ in the product topology, then the sequence of configurations $\bm{x}_t$ is said to \emph{converge} to $\bm{x}_{\infty}$. It is common to abuse terminology slightly and say that the initial configuration $\bm{x}_0$ converges to $\bm{x}_{\infty}$.
\end{definition}  
In \cite{BHT} it is stated as an open problem for $\mathcal{C}_p$ 
whether any initial configuration will converge. The only other explicit mention of this problem that we could find is in \cite{Z} and, as far as we know, it has remained open up to now. Our first main result gives a positive answer:     
   
\begin{theorem}\label{thm:converge}
For any $n \in \mathbb{N}$, $p \in \mathbb{R}_{> 2}$ and $\bm{x}_0 \in (\mathcal{C}_p)^n$, the sequence $(\bm{x}_t)$ defined by the HK-update rule converges. 
\end{theorem}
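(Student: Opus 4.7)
The plan is to adapt the standard Euclidean convergence proof, splitting the analysis according to whether the configuration develops a gap between cyclically consecutive agents that exceeds $r = 1$. If at some time $T$ the configuration $\bm{x}_T$ has such a gap, then no agent's $r$-neighborhood wraps around $\mathcal{C}_p$, so cutting the circle at the midpoint of the gap isometrically embeds $\bm{x}_T$ into an interval on which the HK-update rule coincides with its Euclidean counterpart. Since the Euclidean diameter is nonincreasing under the update, the configuration remains contained in an arc of length at most $p - g < p - 1$, so the gap persists for all subsequent times, and convergence (in fact finite-time freezing) follows from the known Euclidean results, e.g.\ \cite{BBCN, MT}.

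The substantive case is that every $\bm{x}_t$ has all cyclic gaps of length at most $1$, so the configuration genuinely wraps around $\mathcal{C}_p$ at every time. I would first show that in this regime the cyclic order of the agents is preserved by the update, via a localised version of the order-preservation argument used in $\mathbb{R}$: each agent's $r$-neighborhood lies in an arc of length $2r = 2 < p$, so its intrinsic centroid is well-defined, and the planar calculation comparing centroids of two overlapping neighborhoods carries over one arc at a time. The key step is then to produce a Lyapunov function $V_t \ge 0$ with
\[
V_t - V_{t+1} \;\ge\; c \sum_{i=1}^{n} \delta\!\bigl(x_t(i),\, x_{t+1}(i)\bigr)^2
\]
for some $c > 0$, where $V_t$ is a circular analogue of the energy/entropy functionals used in Euclidean HK; a sum of squared arc-distances $\delta(x_t(i), x_t(j))^2$ over pairs within each other's $r$-neighborhoods is the natural candidate. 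Telescoping then forces $\sum_t \sum_i \delta(x_t(i), x_{t+1}(i))^2 < \infty$, so each agent's consecutive displacements vanish.

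To upgrade this into convergence of $(\bm{x}_t)$ itself, I would combine compactness of $(\mathcal{C}_p)^n$ with the finiteness of the set of possible neighborhood graphs. By pigeonhole, some graph $\mathcal{G}$ occurs infinitely often; along such times the update acts as a fixed row-stochastic averaging (in local coordinates obtained by consistently lifting each $r$-neighborhood arc to $\mathbb{R}$), whose invariant set can be read off from the associated Laplacian. Since $\delta(\bm{x}_t, \bm{x}_{t+1}) \to 0$, the accumulation set of $(\bm{x}_t)$ in $(\mathcal{C}_p)^n$ is closed and connected, and combined with the above should be pinned down to a single point. The main obstacle is expected to be the construction and verification of the circular Lyapunov function: natural Euclidean quantities such as the variance around the centroid have no intrinsic analogue on $\mathcal{C}_p$, and the discontinuity of the update rule at configurations with agents exactly at distance $1$ complicates any limiting argument, as already hinted at by the non-freezing behaviour noted in the discussion of Figure \ref{fig:stable}.
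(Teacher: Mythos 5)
Your first two steps line up with the paper: the disconnected case does reduce to the line (the paper dispatches it the same way), and the circular Lyapunov function exists essentially as you guess --- the paper uses $E(\bm{x}) = \sum_{i,j}\min\{1,\delta(x(i),x(j))^2\}$ and shows $E(\bm{x}_t)-E(\bm{x}_{t+1}) \geq 4\sum_i \delta(x_t(i),x_{t+1}(i))^2$ by an almost verbatim adaptation of the Euclidean argument. (One detail: you should cap each pair term at $1$ rather than summing only over current neighbours, so that edges appearing or disappearing cannot increase the functional.) So the part you flag as the main obstacle is in fact the routine part.

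The genuine gap is in your endgame. Vanishing consecutive displacements plus a closed, connected accumulation set do \emph{not} pin the limit down to a point: the accumulation set could be a continuum of ``frozen'' configurations, and on the circle such continua really exist --- every rotation of the four-agent configuration in Figure \ref{fig:stable} is frozen, so a priori a trajectory could drift slowly around this whole circle of translates. Likewise, knowing by pigeonhole that \emph{some} influence graph recurs infinitely often is strictly weaker than what the linear-algebra step needs, namely that the directed influence graph is \emph{eventually constant}; with only recurrence, the dynamics between the good times are governed by varying stochastic matrices and the fixed-Laplacian analysis does not apply. The paper closes exactly this hole with a second quantitative estimate (its Proposition \ref{prop:graph}): whenever the directed influence graph changes at time $t$, the combined energy drop over steps $t$ and $t+1$ is at least $\frac{1}{9n^2}$. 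Since the initial energy is at most $n^2$, the digraph changes at most $18n^4$ times and is therefore eventually fixed; from then on the vector of consecutive gaps evolves by a single matrix $A$, and since the increments $A^t(A-I)\bm{x}^*_0$ tend to $\bm{0}$, the Jordan-form components with eigenvalue of modulus at least $1$ must vanish, yielding exponential convergence. Without an analogue of this ``graph change costs a definite amount of energy'' lemma, your argument establishes only that the limit set is connected and consists of near-fixed configurations, which is not convergence.
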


The proof of this result will be given in the next section. It requires only moderate variations on ideas already presented in the literature when studying the HK-model in Euclidean space. Two important and well-known concepts are involved. Let $\bm{x} \in (\mathcal{C}_p)^n$. Firstly, we define the \emph{energy} of the configuration $\bm{x}$ as 
\begin{equation}\label{eq:energy}
E(\bm{x}) = \sum_{i=1}^{n} \sum_{j=1}^{n} \min \left\{ 1, \delta(x(i), \, x(j))^2 \right\}.
\end{equation}
Secondly, we define the \emph{influence graph} $G_{\bm{x}}$ corresponding to $\bm{x}$ to be the undirected 
graph whose vertices are $1,\, 2, \dots, \, n$ and where an edge is placed between $i$ and $j$ if and only if $\delta(x(i), \, x(j)) \leq 1$. Given an orientation of the circle, we can also define a \emph{directed influence graph} 
$\vec{G}_{\bm{x}}$ by choosing the oriented pair $(i, \, j)$ if and only if the shortest path from $i$ to $j$ follows the orientation.
\par The proof of Theorem \ref{thm:converge} consists of two main parts. Firstly, we show that an application of the HK-update rule always decreases the energy of a configuration, by an amount which is bounded below in terms of $\sum_{i=1}^{n} \delta(\bm{x}_{t}(i),\, \bm{x}_{t+1}(i))^2$. A similar result has been rediscovered several times over for Euclidean space, the earliest reference we could find is Theorem 2 of \cite{RMF}. Our proof for the circle is almost identical to that for Euclidean space given in Theorem 4.3 of \cite{BBCN}. Secondly, we show that if the digraphs $\vec{G}_{\bm{x}_{t-1}}$ and $\vec{G}_{\bm{x}_{t}}$ are different, that is, if at least one edge has been either added, deleted or changed orientation at time $t$, then the sum of the decreases in energy at times $t$ and $t+1$ is bounded below by $\frac{1}{9n^2}$. Since the initial energy is at most $n^2$, this implies that 
the digraph $\vec{G}_{\bm{x}_t}$ can change at most $18 n^4$ times and, in particular, must be fixed from some $t$ onwards. Once we know this, convergence will follow from standard linear algebra arguments. Our reasoning in the second part of the proof also closely resembles known arguments for Euclidean space, though the latter all seem to appeal at some point to the fact that the points of $\bm{x}_{t+1}$ all lie inside the convex hull of those in $\bm{x}_t$. There is no analogous fact for the circle, which gives the argument a new twist. The final proof of convergence, although it only appeals to basic linear algebra, also has no direct counterpart in Euclidean space since configurations always freeze eventually in the latter.
\\
\\
In Section 3, we will prove another result. Given that freezing occurs in Euclidean space after a time which is bounded by a universal polynomial function of the number $n$ of agents and given that, on the circle, the directed influence graph changes at most a polynomial-in-$n$ number of times, it is natural to ask whether freezing of the latter must also happen within a time which is bounded by a polynomial function of $n$ only. This turns out not to be the case. We shall prove that, for all sufficiently large $n$ and any $T \in \mathbb{N}$, there is a choice of perimeter $p$ and a configuration of $n$ agents whose influence digraph remains the same up to time $T$ but eventually changes. Indeed, we can choose the configuration so that $\vec{G}_{\bm{x}_t}$ is strongly connected all the way around the circle until the first change occurs, at which point a pair of adjacent agents become disconnected in the  underlying graph and the configuration subsequently collapses to a complete consensus after three additional time steps.  

Section 4 discusses some remaining open problems.


\section{Proof of Theorem \ref{thm:converge}}\label{sect:pfconverge}

The first step in the proof is to show that the HK-dynamics decrease the energy 
of a configuration as defined in (\ref{eq:energy}). The proof of the following result involves only minor modifications from that of Theorem 4.3 in \cite{BBCN}, but we will present it for the sake of completeness.

\begin{proposition}\label{prop:nrj}
If the sequence of configurations $(\bm{x}_t)$ obeys the HK-update rule then, for any $t\geq 0$, 
\begin{equation}\label{eq:minsk}
E(\bm{x}_{t+1}) \leq E(\bm{x}_t) - 4 \cdot \sum_{i=1}^{n} \delta (\bm{x}_{t+1} (i), \, \bm{x}_{t}(i))^{2}.
\end{equation}
\end{proposition}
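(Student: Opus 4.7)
The plan is to mimic the Euclidean proof from \cite[Theorem 4.3]{BBCN} by performing every computation in local lifts to the universal cover $\pi:\mathbb{R}\to\mathcal{C}_p$. Because $r=1<p/2$, the $r$-neighbourhood of any agent lifts uniquely (up to global translation) to an arc of length at most $2$ in $\mathbb{R}$, and this is precisely what makes the Euclidean variance computation go through almost verbatim on the circle.

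Fix an orientation of $\mathcal{C}_p$ and write $y_i=\bm{x}_t(i)$, $z_i=\bm{x}_{t+1}(i)$, $N_i=\mathcal{N}_t(i)$. For each $i$ pick any lift $\tilde y_i\in\mathbb{R}$, and for each $j\in N_i$ let $\tilde y_j^{(i)}\in\mathbb{R}$ be the unique lift of $y_j$ within distance $1$ of $\tilde y_i$. Set
\[
s_{ij} := \tilde y_j^{(i)}-\tilde y_i \in [-1,1], \qquad p_i := \frac{1}{|N_i|}\sum_{j\in N_i} s_{ij}.
\]
One checks immediately that $s_{ji}=-s_{ij}$, that $p_i$ is independent of the chosen lift of $\tilde y_i$, and that $|p_i|=\delta(z_i,y_i)$.

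The first main step is to reduce the change $E(\bm{x}_{t+1})-E(\bm{x}_t)$ to a sum over ordered neighbour-pairs $S:=\{(i,j):j\in N_i\}$. Any pair $(i,j)\notin S$ contributes $1$ to $E(\bm{x}_t)$ and at most $1$ to $E(\bm{x}_{t+1})$, hence contributes non-positively. For $(i,j)\in S$ the net increment is at most $\delta(z_i,z_j)^2-\delta(y_i,y_j)^2$. The only genuinely circle-specific ingredient is the inequality
\[
\delta(z_i,z_j)^2 \le \bigl(s_{ij}+p_j-p_i\bigr)^2 \qquad\text{for every }(i,j)\in S,
\]
which holds because $\tilde y_j^{(i)}+p_j$ is a lift of $z_j$ and $\tilde y_i+p_i$ is a lift of $z_i$, and the circular distance is bounded above by the gap between any pair of lifts in $\mathbb{R}$.

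What remains is algebra identical to the Euclidean case. Expanding the square and using the two identities $\sum_{j\in N_i} s_{ij}=|N_i|p_i$ and $s_{ji}=-s_{ij}$ gives
\[
\sum_{(i,j)\in S} \bigl[(s_{ij}+p_j-p_i)^2 - s_{ij}^2\bigr] \;=\; -4\sum_i |N_i|\,p_i^2 + \sum_{(i,j)\in S}(p_i-p_j)^2.
\]
Letting $D$ and $A$ be the degree and adjacency matrices of the influence graph $G_{\bm{x}_t}$ and using $|N_i|=\deg_i+1$, this simplifies to $-4\sum_i p_i^2 - 2\,p^{T}(D+A)p$. Since the signless Laplacian is positive semidefinite, as witnessed by the identity
\[
p^T(D+A)p=\sum_{\{i,j\}\in E(G_{\bm{x}_t})}(p_i+p_j)^2\ge 0,
\]
the inequality $E(\bm{x}_{t+1})-E(\bm{x}_t)\le -4\sum_i p_i^2=-4\sum_i\delta(z_i,y_i)^2$ follows. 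The only new point compared to \cite{BBCN} is the edge-lift estimate displayed above; it is the sole place where the circle geometry enters, and it fortunately points in the favourable direction, so there is no real obstacle beyond bookkeeping.
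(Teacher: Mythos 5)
Your proof is correct, and I checked the algebra: the cross term $\sum_{(i,j)\in S}2s_{ij}(p_j-p_i)$ does collapse to $-4\sum_i|N_i|p_i^2$ by antisymmetry of $s$, and the reduction to $-4\sum_i p_i^2-2p^T(D+A)p$ with $|N_i|=\deg_i+1$ is right. The two essential ingredients are the same as in the paper's proof: the pairwise decomposition of the energy in which non-neighbour pairs are discarded because they are capped at $1$, and the circle-specific estimate that the new circular distance is bounded by the difference of the updated local lifts --- your edge-lift inequality $\delta(z_i,z_j)\le|s_{ij}+p_j-p_i|$ is literally the paper's inequality $|r_{t+1}(i,j)|\le|r_t(i,j)-(\Delta_t(i)-\Delta_t(j))|$ in different notation. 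Where you diverge is in the algebraic endgame. The paper, following \cite{BBCN}, packages the pairwise terms into a symmetric auxiliary functional $\mathcal{E}_t(\bm{u},\bm{v})$ whose minimiser in $\bm{u}$ is exactly the HK displacement $\bm{\Delta}_t$, and then extracts the inequality $f(1,1)\le f(0,0)$ from the shape of a convex two-variable quadratic, with the diagonal terms $(u_i+v_j)^2$ supplying the $4\|\bm{\Delta}_t\|^2$. You instead expand everything by hand and identify the resulting quadratic form explicitly as $-4\sum_i p_i^2-2p^T(D+A)p$, discarding the signless Laplacian term via the identity $p^T(D+A)p=\sum_{\{i,j\}\in E}(p_i+p_j)^2\ge 0$. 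Your version is more elementary and makes visible exactly what is being thrown away (the $(p_i+p_j)^2$ terms, which the paper's argument hides inside the inequality $f(1,1)\le f(0,0)$); the paper's version generalises more readily because it never needs the graph-theoretic identity, only convexity and the argmin property of the update. One small point worth making explicit if you write this up: the identity $|p_i|=\delta(z_i,y_i)$ uses that the HK update on $\mathcal{C}_p$ is \emph{defined} by averaging in the local lift, which is legitimate precisely because $p>2r$; this is the same implicit convention as the paper's equation (\ref{eq:Delta}).
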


\begin{proof} Fix an (anti-clockwise) orientation and an origin of $\mathcal{C}_p$, i.e.: a suitable bijection $\phi: \mathcal{C}_p \rightarrow [0, \, p)$. In what follows we shall often not distinguish in writing between a point $x \in \mathcal{C}_p$ and its image $\phi(x)$ - hopefully, it will always be clear form the context which is being referred to.                                              
\par For each $t \geq 0$, let $\bm{\Delta}_{t} = (\Delta_{t}(1),\dots,\, \Delta_{t}(n)) \in [-1, \, 1]^n$ be the vector such that
\begin{equation}\label{eq:Delta}
\Delta_{t}(i) \equiv \, x_{t+1}(i) - x_{t}(i) \, ({\hbox{mod $p$}}), \;\; i = 1,\dots,\, n.
\end{equation}
Note that (\ref{eq:minsk}) can be written in the form
\begin{equation}\label{eq:norminsk}
E(\bm{x}_{t+1}) \leq E(\bm{x}_t) - 4 \cdot ||\bm{\Delta}_t||^2,
\end{equation}
where $|| \cdot ||$ is the Euclidean norm on $\mathbb{R}^n$.   
\par For every $1 \leq i, \, j \leq n$ set
\begin{equation}\label{eq:r}
r_{t}(i, \, j) := \left\{ \begin{array}{lr} \delta(\bm{x}_{t}(i), \, \bm{x}_{t} (j)), & {\hbox{if the shortest path from $\bm{x}_{t}(i)$ to $\bm{x}_{t}(j)$ is
anti-clockwise,}} \\ -\delta(\bm{x}_{t}(i), \, \bm{x}_{t}(j)), & {\hbox{otherwise.}} \end{array} \right.
\end{equation}
We write $i \sim_{t} j$ if $\delta(\bm{x}_{t}(i), \, \bm{x}_{t}(j)) \leq 1$. Define functions $\mathcal{E}_t: \mathbb{R}^n \times \mathbb{R}^n \rightarrow \mathbb{R}$ as follows:
\begin{equation}\label{eq:scripte}
\mathcal{E}_t (\bm{u}, \, \bm{v}) := \sum_{i=1}^{n} \sum_{j=1}^{n} \mathcal{E}_t (\bm{u}, \, \bm{v}, \, i, \, j),
\end{equation}
where, if $\bm{u} = (u_1, \dots, \, u_n)$ and $\bm{v} = (v_1, \dots, \, v_n)$, then
\begin{equation}\label{eq:scripteij}
\mathcal{E}_{t}(\bm{u}, \, \bm{v}, \, i, \, j) := \left\{ \begin{array}{lr}
(u_i - v_j - r_{t}(i, \, j))^2, & {\hbox{if $i \neq j$ and $i \sim_{t} j$}}, \\
1, & {\hbox{if $i \neq j$ and $i \not\sim_{t} j$}}, \\
(u_i + v_j)^2, & {\hbox{if $i = j$}}. \end{array} \right.
\end{equation}
Clearly the functions $\mathcal{E}_t$ are symmetric, i.e.: $\mathcal{E}_t (\bm{u}, \, \bm{v}) = \mathcal{E}_t (\bm{v}, \, \bm{u})$, and 
\begin{equation}\label{eq:ee}
E(\bm{x}_t) = \mathcal{E}_{t} (\bm{0}, \, \bm{0}).
\end{equation}
It is easy to see that the function $\bm{u} \mapsto \mathcal{E}_t (\bm{u}, \, \bm{0})$ is strictly convex on $\mathbb{R}^n$ and attains its global minimum at 
$\bm{u} = \bm{\Delta}_t$. Hence if we define $f_{t}: \mathbb{R}^2 \rightarrow \mathbb{R}$ by $f_{t}(z,\, w) = \mathcal{E}_t (z\bm{\Delta}_t, \, w \bm{\Delta}_t)$ then
\begin{equation}\label{eq:argmin}
1 = {\hbox{argmin}}_{z} \, f_{t}(z,\, 0) = {\hbox{argmin}}_{w} \, f_{t}(0, \, w).
\end{equation}
It is also clear that $f_t$ is a convex second-degree polynomial in $z$ and $w$. Thus it must have the form 
\begin{equation}\label{eq:f}
f_{t}(z,\,w) = A(z-1)^2 + A(w-1)^2 + Bzw + C, \;\;\; {\hbox{for some $A \geq 0$, 
$-2A \leq B \leq 2A$}}.
\end{equation}
In particular, $f(1,\,1) \leq f(0,\,0)$ which means that
\begin{equation}\label{eq:ineq1}
\mathcal{E}_{t}(\bm{\Delta}_t, \, \bm{\Delta}_t) \leq \mathcal{E}_t (\bm{0}, \, \bm{0}).
\end{equation}
Secondly, we claim that 
\begin{equation}\label{eq:ineq2}
\mathcal{E}_{t}(\bm{\Delta}_t, \, \bm{\Delta}_t) - \mathcal{E}_{t+1}(\bm{0}, \, \bm{0}) \geq 4 \cdot ||\bm{\Delta}_t||^2.
\end{equation}
Note that (\ref{eq:ineq2}), (\ref{eq:ineq1}) and (\ref{eq:ee}) would together imply (\ref{eq:norminsk}). To prove (\ref{eq:ineq2}) we simply note that, on the one hand, if $1 \leq i \leq n$ then
\begin{equation}\label{eq:ineq3}
\mathcal{E}_{t}(\bm{\Delta}_t, \, \bm{\Delta}_t, \, i, \, i) = (2 \Delta_{t}(i))^2, \;\;\;\;\;\;\;\; \mathcal{E}_{t+1}(\bm{0}, \, \bm{0}, \, i, \, i) = 0,
\end{equation}
while, if $i \neq j$, then it is easily verified that 
\begin{equation}\label{eq:eq1}
|r_{t+1}(i, \, j)| \leq |r_{t}(i, \, j) - (\Delta_{t}(i) - \Delta_{t}(j))|
\end{equation}
and hence 
\begin{equation}\label{eq:ineq4}
\mathcal{E}_t (\bm{\Delta}_t, \, \bm{\Delta}_t, \, i, \, j) \geq \mathcal{E}_{t+1}(\bm{0}, \, \bm{0}, \, i, \, j), \;\;\;\; i \neq j.
\end{equation}
Together (\ref{eq:ineq4}) and (\ref{eq:ineq3}) imply (\ref{eq:ineq2}), so the proof of the proposition is complete.
\end{proof}

The second step in the proof of Theorem \ref{thm:converge} is to establish a lower bound on the decrease in energy resulting from any change in the directed influence graph.

\begin{proposition}\label{prop:graph}
Let $t > 0$ and suppose that $\vec{G}_{\bm{x}_{t-1}} \neq \vec{G}_{\bm{x}_{t}}$. Then, with notation as in (\ref{eq:Delta}), 
\begin{equation}\label{eq:change}
||\bm{\Delta}_t||^2 + ||\bm{\Delta}_{t+1}||^2 \geq \frac{1}{9n^2}.
\end{equation}
\end{proposition}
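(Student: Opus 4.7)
Since $p > 2$, every edge $\{i, j\}$ has $\delta(x(i), x(j)) \leq 1 < p/2$, so the shortest arc between its endpoints is unambiguous and no existing edge can change orientation without first being deleted, as a flip would require the endpoints to pass through antipodes and so $\delta = p/2 > 1$. Hence the change $\vec{G}_{\bm{x}_{t-1}} \neq \vec{G}_{\bm{x}_t}$ must be witnessed by at least one unordered pair being added to or deleted from the edge set. I would treat the edge-addition case in detail; the deletion case is handled by a symmetric argument in which the formerly coupled pair, now isolated, leaves the time-$t$ configuration structurally unbalanced with respect to the new graph.

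The plan is proof by contradiction: assume $\|\bm{\Delta}_t\|^2 + \|\bm{\Delta}_{t+1}\|^2 < \frac{1}{9n^2}$, so that $|\Delta_t(k)|, |\Delta_{t+1}(k)| < \frac{1}{3n}$ for every agent $k$. Let $\{i, j\}$ be a pair whose edge is added at time $t$, so $\delta(\bm{x}_{t-1}(i), \bm{x}_{t-1}(j)) > 1$ and $d := \delta(\bm{x}_t(i), \bm{x}_t(j)) \leq 1$. The central calculation is to lower-bound the relative motion $\Delta_t(i) - \Delta_t(j)$. Expanding $x_{t+1}(i)$ and $x_{t+1}(j)$ as centroids over $\mathcal{N}_t(i)$ and $\mathcal{N}_t(j)$, I would partition $\mathcal{N}_t(i) \cup \mathcal{N}_t(j)$ into the common neighbours $A := \mathcal{N}_t(i) \cap \mathcal{N}_t(j)$ (which contains both $i$ and $j$), the exclusively-$i$ neighbours $B := \mathcal{N}_t(i) \setminus \mathcal{N}_t(j)$, and the exclusively-$j$ neighbours $C := \mathcal{N}_t(j) \setminus \mathcal{N}_t(i)$, noting that agents in $B$ lie on the far side of $i$ from $j$ and those in $C$ lie on the far side of $j$ from $i$ (as their distance to the opposite vertex exceeds $1$). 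In the clean sub-case $B = C = \emptyset$ the two centroids coincide, so $|\Delta_t(i)| + |\Delta_t(j)| \geq d$ by the triangle inequality, giving $\|\bm{\Delta}_t\|^2 \geq d^2/2$ and hence a contradiction whenever $d \geq \frac{2}{3n}$.

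The main obstacle is the small-$d$ regime, where the direct bound above is too weak. Here I would exploit the preceding step: the transition from $\delta_{t-1} > 1$ to $\delta_t = d$ means that the pair $(i, j)$ jointly traversed a distance of at least $1 - d$ during $\bm{x}_{t-1} \to \bm{x}_t$, so the configuration $\bm{x}_t$ still carries a large inherited imbalance coming from the disjoint clusters $\mathcal{N}_{t-1}(i), \mathcal{N}_{t-1}(j)$ from which $x_t(i)$ and $x_t(j)$ were averaged on opposite sides. Propagating this imbalance through the near-linear averaging over the next two updates must force at least one of $\|\bm{\Delta}_t\|^2, \|\bm{\Delta}_{t+1}\|^2$ above $\frac{1}{18n^2}$; making this precise is the main technical work, and quantifying it requires using the bounds $|A|, |B|, |C| \leq n$ together with the fact that the coordinate-wise motion assumption forces the neighbourhood structure at times $t, t+1, t+2$ to be nearly stable. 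The constant $\frac{1}{9n^2} = \left(\frac{1}{3n}\right)^2$ arises by optimising across the large-$d$ estimate $\|\bm{\Delta}_t\|^2 \geq d^2/2$ and the small-$d$ propagation, with the two regimes meeting at $d \asymp 1/n$.
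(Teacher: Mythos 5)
Your reduction and opening moves match the paper's: restrict to an added (or deleted) oriented edge, suppose every coordinate of the relevant $\bm{\Delta}$'s is below $\frac{1}{3n}$, and conclude that the newly joined pair $i,j$ must still be at distance at least $1-\frac{2}{3n}$ afterwards. But the proposal stops exactly where the real difficulty begins. Under your contradiction hypothesis $d$ is forced to be close to $1$, so your ``small-$d$ regime'' never actually arises from an edge addition, and in the large-$d$ regime you only settle the degenerate sub-case $B=C=\emptyset$; the generic case, with $B$ or $C$ nonempty, is precisely what you defer as ``the main technical work.'' That deferred step is the entire content of the proposition. The issue is that the new neighbour $i$ exerts a pull of order $\frac{1}{n}\bigl(1-\frac{2}{3n}\bigr)$ on $j$ at the next update, and one must rule out its cancellation. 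The paper isolates exactly two cancellation mechanisms: either the old neighbours of $j$ collectively moved a net distance $M\geq\frac{2}{3}\bigl(1-\frac{1}{n}\bigr)$ in the opposite direction, in which case Cauchy--Schwarz over at most $n-1$ agents gives $\|\bm{\Delta}_t\|^2\geq M^2/(n-1)>\frac{1}{9n^2}$ directly; or $j$ simultaneously acquired a new neighbour on its far side. The second alternative is the circle-specific obstruction your ``inherited imbalance'' paragraph never confronts: with no convex-hull containment available, the compensation could in principle propagate all the way around the circle. The paper closes this loophole combinatorially: each newly added oriented edge $(i_k,i_{k+1})$ forces $i_{k+1}$ to have moved strictly further clockwise than $i_k$, so a chain of such saves would have to be an infinite strictly monotone sequence of displacements among finitely many agents, which is impossible. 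Nothing in your sketch plays the role of either the Cauchy--Schwarz estimate or this termination argument.

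Two smaller points. First, the constant $\frac{1}{9n^2}$ does not come from optimising a $d^2/2$ bound against a propagation bound at $d\asymp 1/n$; it comes from the single threshold $\frac{1}{3n}$ on individual displacements, reused in both branches of the dichotomy. Second, your claim that an orientation flip requires passing through antipodes is a continuous-time intuition; in discrete time it is cleaner to do as the paper does and fold orientation changes into the four cases of gaining or losing an oriented neighbour on the left or right, each of which is then handled by the same argument.
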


\begin{proof}
We again suppose that an anti-clockwise orientation of the circle has been fixed. The terms left and right will be used synonymously with clockwise and anti-clockwise respectively. If $\vec{G}_{\bm{x}_{t-1}} \neq \vec{G}_{\bm{x}_t}$ then a priori at least one of the following four things must have happened at time $t$:
\begin{enumerate}[(i)]
\item some agent, $i_1$ say, has a new neighbour $i_2$ to his right. In other words, the oriented edge $(i_1, \, i_2)$ is in $\vec{G}_{\bm{x}_t}$ but not in $\vec{G}_{\bm{x}_{t-1}}$. Note that it could still be the case (if $p \leq 3$) that $(i_2, \, i_1) \in \vec{G}_{\bm{x}_{t-1}}$, it will not matter for the argument we present below.
\item some agent $i_1$ has a new neighbour $i_2$ to his left.
\item some agent $i_1$ has a lost a neighbour $i_2$ on his right.
\item some agent $i_1$ has lost a neighbour $i_2$ on his left.
\end{enumerate}
We suppose that (i) occurs - analogous arguments can be given in each of the other three cases. Pick an oriented edge $(i_1, \, i_2)$ which is added at time $t$. If either of these agents moved a distance of at least $\frac{1}{3n}$ at step $t$, then $||\bm{\Delta}_{t}||^2 \geq \frac{1}{9n^2}$ and we are done. Otherwise, the distance between these two agents at time $t$ is still at least $1 - \frac{2}{3n}$. Now agent $i_1$ will draw agent $i_2$ to the left (i.e.: clockwise) at step $t+1$. If agent $i_2$ were to move a distance of at least $\frac{1}{3n}$ to the left at step $t+1$ then we would be done again. But since $i_2$ gained a neighbour to his left at time $t$ and since the ordering of agents around the circle is always preserved by the HK-dynamics, he cannot have lost any neighbours to his left at time $t$. Hence, there are only two possible ways he can be prevented from moving at least $\frac{1}{3n}$ to his left at time $t+1$:
\begin{enumerate}[(a)]
\item there is some $i_3$ such that the oriented edge $(i_2, \, i_3)$ is also added to the digraph at time $t$, or 
\item the neighbours of $i_2$ at time $t-1$, including himself, move on average sufficiently far to the right at time $t$ to compensate for the appearance of $i_1$. Let $M$ be the total net movement to the right of these neighbours at time $t$. Then, since $i_2$ has certainly no more than $n$ neighbours at time $t$ and since $i_1$ is such a neighbour, his movement to the left at time $t+1$ is bounded below by $\frac{1}{n} \left( 1 - \frac{2}{3n} - M \right)$. Thus $M \geq \frac{2}{3} \left( 1 - \frac{1}{n} \right)$. Since $i_2$ has at most $n-1$ neighbours at time $t-1$, by the Cauchy-Schwarz inequality the sum of their contributions to $||\Delta_{t}||^2$ is at least $\frac{M^2}{n-1}$, which is greater than $\frac{1}{9n^2}$ for all $n \geq 3$. 
\end{enumerate}
Thus we are done again unless (a) occurs. Now we can apply the same analysis to the pair $(i_2, \, i_3)$ and so on. For the proposition to fail, we would have to have an infinite, anti-clockwise sequence of agents $i_1, \, i_2, \dots$ such that each oriented edge $(i_k, \, i_{k+1})$ appeared at time $t$. But this would mean that, for every $k$, agent $i_{k+1}$ moved further to the left than $i_k$ at time $t$. Since there are only finitely many agents, this is impossible. 
\end{proof}
   
\begin{proof} \emph{of Theorem \ref{thm:converge}}. The energy of a configuration of $n$ agents is certainly no more than $n^2$. Hence, if the sequence of configurations $(\bm{x}_t)$ obeys the HK-update rule, it follows from Propositions \ref{prop:nrj} and \ref{prop:graph} that there are at most $18 n^4$ values of $t$ for which $\vec{G}_{\bm{x}_t} \neq \vec{G}_{\bm{x}_{t+1}}$. Thus it suffices to prove Theorem \ref{thm:converge} under the assumption that the directed influence graph $\vec{G}$ is the same for all time. Furthermore, we may assume 
that the underlying graph $G$ is connected, as otherwise the dynamics are the same as for a configuration on the real line, in which case we know it will freeze in finite time. 
\par Once again, let us fix a choice of a suitable bijection $\phi: \mathcal{C}_p \rightarrow [0,\, p)$. The map $\phi$ induces an ordering of the components of any $\bm{x} \in \mathcal{C}_{p}^{n}$. Let us assume that, for our initial configuration $\bm{x}_0 = (x_0 (1),\dots,\,x_0 (n))$, we have $\phi(x(1)) \leq \phi(x(2)) \leq \dots \leq \phi(x(n))$. Then, since the HK-dynamics will never cause agents to cross, for all $t$ the agents $1, \, 2, \dots, \, n$ will retain an anti-clockwise ordering, even though their ordering with respect to $\phi$ may be cyclically shifted. 
\par Let $\mathcal{P}^{n}$ denote the set of probability vectors in $\mathbb{R}^n$, i.e.: $\mathcal{P}^n = \{(v_1,\dots,\,v_n)^T \in \mathbb{R}^n: v_i\geq 0, \, \sum_{i} v_i = 1\}$. Then there is a natural map $\bm{x}_t \mapsto \bm{x}^{*}_{t} \in \mathcal{P}^n$ defined by 
\begin{equation}\label{eq:distform}
\bm{x}^{*}_{t} = (x^{*}_{t}(1),\dots,\, x^{*}_{t}(n)) \;\; {\hbox{where}} \;\; p \cdot x^{*}_{t}(i) = \left\{ \begin{array}{lr} \delta(x_{t}(i+1), \, x_{t}(i)), & {\hbox{if $1 \leq i \leq n-1$}}, \\  \delta(x_{t}(1)), \, x_{t}(n)), & {\hbox{if $i = n$}}. \end{array} \right.
\end{equation}
Since the directed influence graph is fixed, there is a fixed $n \times n$ real matrix $A$ such that the dynamics satisfy{\footnote{Note that it is crucial that the directed influence graph be fixed for the same to be true of the transition matrix, it would not suffice in general for the underlying graph to be fixed.}} 
\begin{equation}\label{eq:distdyn}
\bm{x}^{*}_{t+1} = A \, \bm{x}^{*}_{t}, \;\;\;\; {\hbox{for all $t \geq 0$}}.
\end{equation}

By Proposition \ref{prop:nrj}, we know that the differences $\bm{x}^{*}_{t+1} - \bm{x}^{*}_{t}$ tend to $\bm{0}$ as $t\rightarrow\infty$. Hence, letting $\bm{v}:=(A-I_n) \bm{x}^*_0$, we have
\begin{equation}\label{eq:xdiff}
\bm{x}^*_{t+1}-\bm{x}^*_t = A^t \bm{v} \rightarrow \bm{0} \text{ as }t\rightarrow\infty.
\end{equation}
Hence, $\bm{v}$ must be identically zero in those coordinates corresponding to each block in the Jordan normal form of $A$ whose eigenvalue is at least one in absolute value. It follows that there exists a $c>0$, only depending on $A$, such that $\|\bm{x}^*_{t+1}-\bm{x}^*_t\| = O(e^{-ct})$. Thus $\bm{x}^*_t$ must converge to a limit, call it $\bm{x}^*_\infty$, such that
\begin{equation}\label{eq:xinfty}
||\bm{x}^*_t-x^*_\infty||=O(e^{-ct}).
\end{equation}
By the definition of energy in (\ref{eq:energy}), it follows that there is some $E_{\infty} \in \mathbb{R}_{\geq 0}$ such that $E(\bm{x}_t) = E_{\infty} + O(e^{-\Omega(t)})$ and hence in turn, by Proposition \ref{prop:nrj}, that $||\bm{\Delta}_t||^2 = O(e^{-\Omega(t)})$. This establishes convergence of the sequence $(\bm{x}_t)$, in the sense of Definition 1.1. 
\end{proof}

\begin{rem}\label{rem:irredA}
An alternative argument for obtaining (\ref{eq:xinfty}) is as follows. It is easy to see that the matrix $A$ in (\ref{eq:distdyn}) has non-negative entries. We claim that $A$ is column-stochastic. To prove this, it suffices to prove that 
\begin{equation}\label{eq:ones}
\bm{1}^T A \bm{v} = \bm{1}^T \bm{v} \;\; {\hbox{for all $\bm{v} \in \mathbb{R}^n$, where $\bm{1} = (1,\dots,\, 1)^T$}}. 
\end{equation}
By definition, (\ref{eq:ones}) holds for $\bm{v} = \bm{x}^{*}_t$ and any $t$. Now let $\bm{v}$ be any probability vector. The transformation $\bm{v} \mapsto A\bm{v}$ describes an updating of a configuration of $n$ agents, possibly involving a cyclic shift of the agents' indices. This may not coincide with the HK-update rule for this configuration as there is no guarantee that agents interact precisely with those within unit distance of themselves. However, exactly the same pairs of agents interact as would be the case if the configuration were some $\bm{x}_t$. In particular, this means that the anti-clockwise ordering of agents will be preserved (no agents will cross), which implies that $A\bm{v}$ will also be a probability vector. Thus (\ref{eq:ones}) holds for any probability vector, and hence for any $\bm{v} \in \mathbb{R}^n$. 
\par Now $A$ may not be regular, since there may be pairs of agents that have exactly the same neighbours in $\vec{G}$ at $t = 0$. However, any such pair will immediately reach consensus at $t=1$. If we start at $t=1$ and replace the vector $\bm{x}^{*}_{t}$ of (\ref{eq:distform}) by the (possibly shorter) vector of non-zero distances, then it is easy to see that we have an analogue of (\ref{eq:distdyn}) in which the transition matrix is now regular. Thus we will have exponential convergence to the steady-state, which yields (\ref{eq:xinfty}).
\end{rem}
\begin{rem}\label{rem:4points}
With notation as above we have, for the configuration $\bm{x} \in (\mathcal{C}_p)^4$, $p \in \left( 2, \, \frac{8}{3} \right)$, of Figure 1 that 
\begin{equation}\label{eq:4ex}
\bm{x}^{*} = \left[ \begin{array}{c} 1/8 \\ 3/8 \\ 3/8 \\ 1/8 \end{array} \right], \;\;\;\; \bm{x}^{*} = A\bm{x}^{*} \;\; {\hbox{where}} \;\; A = \left[ \begin{array}{cccc} 1/6 & 1/4 & 0 & 1/12 \\ 1/2 & 5/12 & 1/3 & 1/4 \\ 1/4 & 1/3 & 5/12 & 1/2 \\ 1/12 & 0 & 1/4 & 1/6 \end{array} \right].
\end{equation}
The second eigenvalue of $A$ is $\lambda_2 = \frac{1}{3}$ with eigenvector 
$\bm{v}_2 = (1, \, 1, \, -1, \, -1)^T$. If we choose $\bm{x}_0$ such that $\bm{x}^{*}_{0} = \bm{x}^{*} + \varepsilon \bm{v}_2$ then, for sufficiently small but non-zero 
$\varepsilon$ (how small depends on how close $p$ is to $2$ or to $8/3$), the HK-updates will satisfy (\ref{eq:distdyn}) and the configurations $\bm{x}_t$ will converge to some translation of $\bm{x}_0$ without ever actually reaching it.  
\end{rem}

\section{Influence graphs which take arbitrarily long to freeze}\label{sect:longfreeze}

In this section we will prove that the time taken for the (directed or undirected) influence graph to freeze is unbounded as a function of $n$. To simplify some notation, we will assume the perimeter $p$ to be fixed and instead allow $r$ to vary in the interval $\left( \frac{p}{n}, \, \frac{p}{2} \right)$.
 
\begin{theorem}\label{thm:longfreeze}
For any sufficiently large $n$ there is an initial configuration with $n$ agents such that the time it takes for the influence graph to freeze can be made arbitrarily large by picking $r$ appropriately. Hence, the time until the influence graph freezes is unbounded solely as a function of $n$.
\end{theorem}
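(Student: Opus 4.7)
The plan is to construct, for $n$ sufficiently large and any prescribed $T$, an initial configuration together with a value of $r$ such that the directed influence graph remains constant for $\sim T$ steps before a single edge disappears. The key observation is that once the digraph is fixed, the gap vector $\bm{x}^*_t$ evolves linearly via some matrix $A$ (as in (\ref{eq:distdyn})), and if the \emph{steady state} of this linear system is inconsistent with the current digraph (in the sense that some gap in $A^{\infty}\bm{x}^*_0$ would exceed $r$ whilst being an edge at time $0$), then a digraph change must happen, and its time can be pushed arbitrarily far into the future by tuning $r$ toward the critical threshold.

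First, I would construct a family of configurations with the following properties. For fixed large $n$, I choose a configuration of $n$ agents whose directed influence graph $\vec{G}_0$ is strongly connected around the circle but is \emph{not uniform} (different agents have neighborhoods of different sizes). The non-uniformity -- achieved via, for example, a few additional chord edges in a localised region, or a small number of very close clusters of different multiplicity -- is essential, since a uniform $(2k+1)$-regular digraph has the property that the corresponding matrix $A$ is doubly stochastic on the gap vector, which forces $\max_i g_i$ to be non-increasing and so no adjacent pair could ever become disconnected. With non-uniform $A$, the steady state $\bm{g}^{\infty}$ can have a specific gap $g_{i^*}^{\infty}$ strictly larger than the initial value $g_{i^*}(0)$.

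Second, with such a family at hand, I would set $r = g_{i^*}^{\infty} - \delta$ for some small $\delta > 0$ chosen so that $g_{i^*}(0) < r < g_{i^*}^{\infty}$, and so that all \emph{other} gaps and non-adjacent distances are bounded away from the threshold $r$ by an amount much larger than $\delta$ (this is a robustness requirement on the construction). Writing $g_{i^*}(t) = g_{i^*}^{\infty} + O(|\lambda_2(A)|^{t})$ from the Jordan form analysis of Section 2, the first crossing time satisfies
\begin{equation*}
T(\delta) \;=\; \Theta\!\left(\frac{\log(1/\delta)}{1-|\lambda_2(A)|}\right),
\end{equation*}
which tends to infinity as $\delta \to 0^{+}$. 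Meanwhile the robustness requirement guarantees that for every $t < T(\delta)$, the digraph still equals $\vec{G}_0$, so the linear dynamics do indeed govern the evolution throughout this time interval.

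Third, at time $T(\delta)$, the particular gap $g_{i^*}$ first exceeds $r$, the corresponding directed edge between consecutive agents is deleted, and the underlying graph becomes a proper subgraph of the circle, so the configuration now lives effectively on a real line. It then remains to verify that this post-change configuration freezes in at most $3$ additional time steps; this should follow from the fact that, by the time $T(\delta)$ when the adjacent gap finally crosses $r$, almost all of the configuration has already concentrated (to within $O(\delta)$) into two or three tight groups, so that $1$-dimensional HK dynamics on these few effective super-agents must terminate in a bounded number of steps.

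The main obstacle, as I see it, is the first step: exhibiting a concrete non-uniform digraph $\vec{G}_0$ on the circle whose linear gap-update matrix $A$ has a steady state violating the digraph constraint at exactly one adjacent-gap coordinate. The structural reason this is delicate is that the digraph must be strongly connected around the circle (every adjacent gap $\le r$) \emph{initially}, and the same digraph must remain self-consistent throughout the linear regime; so the "bad" coordinate $i^*$ must grow toward $g_{i^*}^{\infty}$ without dragging any other distance across its own threshold before time $T(\delta)$. Once a family with this behaviour is written down explicitly, the timing estimate and the three-step collapse afterwards should follow by direct verification.
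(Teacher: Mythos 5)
Your overall strategy --- fix the digraph, let the gap vector evolve linearly toward a steady state in which one particular gap sits at a critical value, and place $r$ just below that value so that the first crossing time blows up as $r$ approaches the threshold --- is exactly the mechanism of the paper's proof. However, you have explicitly left open the step that carries essentially all of the content of the theorem: exhibiting a concrete configuration with the required behaviour. As written, the argument reduces the theorem to an unproved existence claim (your ``main obstacle''), so it is not yet a proof.

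The paper closes precisely this gap with a construction that sidesteps the difficulty you identify. Rather than searching for a digraph whose linear steady state ``violates'' its own constraint at one coordinate, it starts from a configuration that is already \emph{frozen}: five clusters of sizes $m_1, m_2, m_1, m_1, m_2$ with $m_1 < m_2 < 2m_1$ and $n = 3m_1 + 2m_2$, spaced so that each cluster sees exactly one cluster on each side and the attractions balance; in this frozen state the gap between the two adjacent clusters of size $m_1$ equals $d_2 = (m_2/m_1)d_1$ while all other gaps equal $d_1$. One then perturbs this state by two small parameters $y_0(1), y_0(2)$ so that the critical gap becomes strictly smaller than $d_2$; the perturbation obeys an explicit two-dimensional positive linear recursion whose iterates tend to $0$, so the critical gap tends to $d_2$ from below without ever reaching it, while all other inter-cluster distances stay bounded away from $d_2$. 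Taking $r$ slightly below $d_2$ (and above the initial value of the critical gap) then gives exactly your conclusion that the crossing time tends to infinity. Note that because the linear steady state is itself a legitimately frozen configuration which is approached but never attained, the self-consistency of the digraph throughout the linear regime --- the issue you worry about --- is automatic once $y_0(1), y_0(2)$ are small. Finally, the three-step collapse to consensus after disconnection is not needed for Theorem \ref{thm:longfreeze} at all; it is the subject of the separate Remark \ref{rem:consensus}, where it requires the additional constraint $m_2/m_1 > \xi_0 \approx 1.7693$, so you should not burden the proof of the theorem with it.
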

\begin{proof}
Pick positive integers $m_1, m_2$ such that $m_1 < m_2 < 2 m_1$ and $n=3m_1 + 2 m_2$. This is possible provided $n$ is sufficiently large{\footnote{Indeed, there is a solution for any $n \geq 12$ except if $n \in \{13, \, 14, \, 15, \, 16, \, 18, \, 20, \, 21, \, 23, \, 25, \, 28, \, 30, \, 35\}$.}}.

Consider a state consisting of five clusters{\footnote{By a \emph{cluster} we mean a set of agents in agreement.}} of sizes $m_1, \, m_2, \, m_1, \, m_1, \, m_2$ as illustrated in Figure \ref{fig:slowconv}. If $d_2 = \left(\frac{m_2}{m_1}\right) d_1$ then such a configuration will be stable for any choice of $r \in \left( d_2, \, 2d_1 \right)$. Let $\bm{x}$ denote such a stable state and for $y_0(1), \, y_0(2)$ sufficiently small positive numbers, let $\bm{x}_0$ denote the perturbation of $\bm{x}$ as indicated in the figure. 

\begin{figure}[h]
\begin{center}
\includegraphics[scale=1, trim = 50mm 0mm 50mm 0mm, clip,]{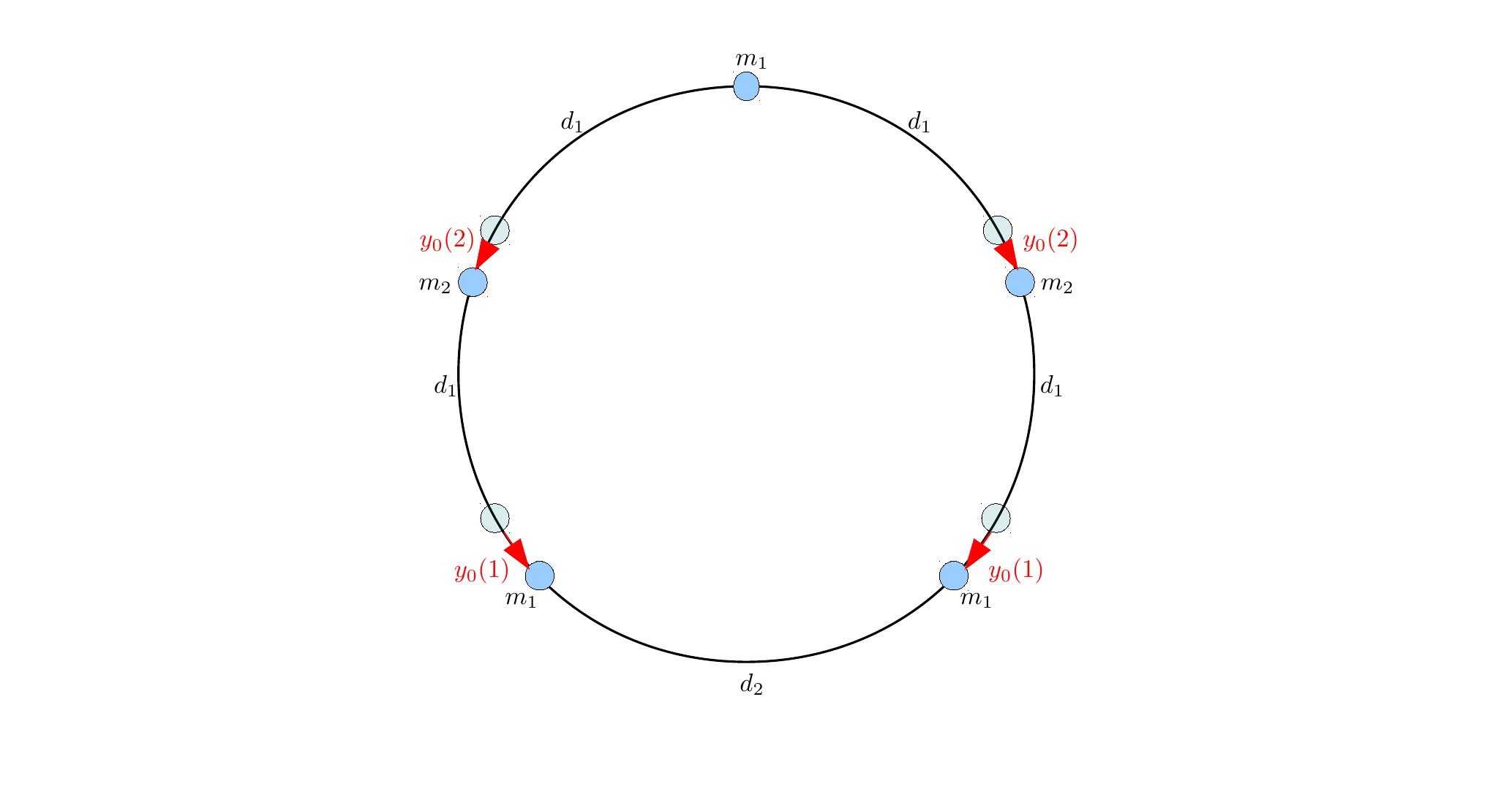}
\caption{A state consisting of $n$ agents in $5$ clusters. We assume that every cluster sees one other cluster on each side. The inter-cluster distances $d_1, \, d_2$ refer to the stable state. The arrows indicate the perturbation from the stable state we consider.}
\label{fig:slowconv}
\end{center}
\end{figure}


It is easy to see that, as long as every cluster in $\bm{x}_t$ sees exactly one other cluster on either side, the state at time $t+1$ is the perturbation of $\bm{x}$ by $y_{t+1}(1)$ and $y_{t+1}(2)$, defined recursively by
\begin{align}
y_{t+1}(1) &= \frac{m_2 y_t(2)}{2m_1+m_2}\\
y_{t+1}(2) &= \frac{m_1 y_t(1) + m_2 y_t(2)}{2m_1+m_2}.
\end{align}
Note that these sequences are positive, bounded by $\max\left\{y_0(1), \, y_0(2)\right\}$ and tending to $0$ as $t\rightarrow\infty$. In particular, this means that the distance between the adjacent clusters of equal size in $\bm{x}_t$ is strictly less than, but tending to $d_2$. We assume $y_0(1)$ and $y_0(2)$ are sufficiently small such that the distance between any other pairs of clusters remains bounded away from $d_2$.

To complete the proof of the theorem, consider the HK-dynamics with initial state $\bm{x}_0$ as described above for $r$ slightly smaller than $d_2$. In this case, the influence graph will be constant until the distance between the clusters of equal size is greater than $r$, at which time the model behaves as five clusters on $\mathbb{R}$. The time before this happens can be made arbitrarily long by choosing $r$ sufficiently close to $d_2$.
\end{proof}

\begin{rem}\label{rem:consensus}
Suppose we run the HK-dynamics on the real line, starting from five equally spaced clusters of sizes $m_1, \, m_2, \, m_1, \, m_2$ and $m_1$, reading from left to right, with inter-cluster spacing $d_1$ and $r \in \left(d_1, \, d_2 \right)$, where $d_2 = \left( \frac{m_2}{m_1} \right) d_1$ and $m_1 < m_2 < 2m_1$. It is easy to see that any such configuration will eventually reach a consensus. Hence, on the circle, the same will be true of the configuration in Theorem \ref{thm:longfreeze} for $r$ sufficiently close to $d_2$. We now show that the various parameters can be chosen so that, on the real line, consensus occurs after three time steps and hence, on the circle, three time steps after the graph disconnects. 
\par First consider the real line again. At $t=1$, the two extreme clusters will move inwards a distance $\left( \frac{m_2}{m_1 + m_2} \right) d_1$ while the other three clusters will remain fixed. Hence if $\left(1 + \frac{m_1}{m_1 + m_2} \right) d_1 < r < d_2$, which can be satisfied provided $m_2 > \sqrt{2} \, m_1$, then the middle cluster of size $m_1$ will be visible to all the others at $t=1$. Hence at $t=2$ we will have clusters of size $m_1 + m_2$ on either side of the middle cluster, and each at a distance of $\left( \frac{m_{1}^{2} + m_{2}^{2} + m_1 m_2}{m_{2} (2m_1 + m_2)} \right) \, d_1$ from it. Thus if 
\begin{equation}\label{eq:collapse}
\frac{r}{d_1} \in \left( \frac{2(m_{1}^{2} + m_{2}^{2} + m_1 m_2)}{m_{2} (2m_1 + m_2)}, \, \frac{m_2}{m_1} \right),
\end{equation}
then the configuration will collapse to a complete consensus at $t=3$. One may check that the open interval in (\ref{eq:collapse}) is non-empty if $\xi_0 < \frac{m_2}{m_1} < 2$, where $\xi_0 \approx 1.7693 ...$ is the unique real root of the equation $\xi^{3} - 2\xi - 2 = 0$. It follows that also on a circle of fixed perimeter, for all $n$ sufficiently large and all $T \in \mathbb{N}$, there is a choice of $r$ such that the configuration in Theorem \ref{thm:longfreeze} will remain connected up to time $T$ and, once it disconnects, will collapse to consensus after three further time steps. 
\end{rem}

\section{Open Problems}\label{sect:open}

It is natural to ask whether the methods of this paper can be adapted to prove convergence of the HK-dynamics in even more general spaces than the circle. The case of higher-dimensional tori can be motivated in the usual manner, by imagining that there is more than one issue on which agents have opinions. Generalisation to even more abstract spaces would at present seem to be a purely mathematical curiosity. One of the main challenges in such work would be to give a mathematically rigorous formulation of the problem. 
\par On the circle, there remain some questions of possible interest. One could investigate how the time to freezing of the influence digraph behaves for fixed $p$ and $r$. This would seem to depend very subtly on either parameter. Another question that occurred to us but which we cannot presently resolve is whether the sequence of influence digraphs $\vec{G}_{\bm{x}_t}$ resulting from a given initial configuration must always be acyclic - in other words, if $\vec{G}_{\bm{x}_s} = \vec{G}_{\bm{x}_t}$ for some $s < t$, must it be the case that $\vec{G}_{\bm{x}_u} = \vec{G}_{\bm{x}_v}$ for all $s \leq u, \, v \leq t$ ?

\section{Acknowledgements}
We thank Bernadette Charron-Bost for pointing out a small error in an earlier version of the manuscript, in which we incorrectly stated that the matrix $A$ in (\ref{eq:distdyn}) was regular. Remark \ref{rem:irredA} presents the corrected assertions.

\end{document}